\newtheorem{theorem}{Theorem}[section]
\newtheorem{corollary}[theorem]{Corollary}
\newtheorem{definition}[theorem]{Definition}
\newtheorem{remark}[theorem]{Remark}
\newtheorem{assumption}{Assumption}
\newcommand{\R}{{\mathbb{R}}}
\newcommand{\N}{{\mathbb{N}}}
\newcommand{\intcc}[1]{\ensuremath{{\left[#1\right]}}}
\newcommand{\as}{\overset{\ra}{s}}
\newcommand{\ra}{\rightarrow}
\newcommand{\Let}{:=}
\newcommand{\EE}{\mathds{E}}
\newcommand{\PP}{\mathds{P}}
\title{Compositional Abstractions of Interconnected Discrete-Time\\ Stochastic Control Systems
\thanks{This work was supported in part by the German Research Foundation
(DFG) through the grant ZA 873/1-1.}}
\author{Abolfazl Lavaei, Sadegh Esmaeil Zadeh Soudjani, Rupak Majumdar, and Majid Zamani	
\thanks{A. Lavaei and M. Zamani are
	with the Department of Electrical and Computer Engineering, Technical University
	of Munich, D-80290 Munich, Germany. S. Esmaeil Zadeh Soudjani and R. Majumdar are with the Max Planck Institute for Software Systems, Kaiserslautern 67663, Germany. Email:
	\texttt{\{lavaei,zamani\}@tum.de}, \texttt{\{sadegh,rupak\}@mpi-sws.org}.}
}
\begin{document}
\maketitle

\begin{abstract}
This paper is concerned with a compositional approach for constructing abstractions of interconnected discrete-time stochastic control systems. The abstraction framework is based on new notions of so-called stochastic simulation functions, using which one can quantify the distance between original interconnected stochastic control systems and their abstractions in the probabilistic setting. Accordingly, one can leverage the proposed results to perform analysis and synthesis over abstract interconnected systems, and then carry the results over concrete ones. In the first part of the paper, we derive sufficient small-gain type conditions for the compositional quantification of the distance in probability between the interconnection of stochastic control subsystems and that of their abstractions. In the second part of the paper, we focus on the class of discrete-time linear stochastic control systems with independent noises in the abstract and concrete subsystems. For this class of systems, we propose a computational scheme to construct abstractions together with their corresponding stochastic simulation functions. We demonstrate the effectiveness of the proposed results by constructing an abstraction (totally 4 dimensions) of the interconnection of four discrete-time linear stochastic control subsystems (together 100 dimensions) in a compositional fashion.
\end{abstract}

\section{Introduction}
Large-scale interconnected systems have received significant attentions in the last few years due to their presence in real life systems including power networks, air traffic control, and so on. Each complex real-world system can be regarded as an interconnected system composed of several subsystems. Since these large-scale network of systems are inherently difficult to analyze and control, one can develop compositional schemes to employ the abstractions of the given systems as a replacement in the controller design process. In other words, in order to overcome the computational complexity in large-scale interconnected systems, one can abstract the original concrete system by a simpler one with lower dimension. Those abstractions allow us to design controllers for them, and then refine the controllers to the ones for the concrete complex systems, while provide us with the quantified errors in this controller synthesis detour.

In the past few years, there have been several results on the construction of (in)finite abstractions for stochastic systems.
Existing results include infinite approximations for a class of stochastic hybrid systems \cite{julius2009approximations} and
finite approximations for discrete-time stochastic models with continuous state spaces \cite{tkachev2011infinite,SA13,SSoudjani}.
Construction of finite bisimilar abstractions for stochastic control systems is proposed in \cite{zamani2014symbolic,ZTA1}.
Recent results address stochastic switched systems \cite{zamani2014approximately,zamani2015symbolic} and propose compositional construction of infinite abstractions of continuous-time stochastic control systems \cite{zamani2014compositional,zamani2016approximations} using small-gain type compositional reasoning.

In this paper, we provide a compositional approach for the construction of infinite abstractions of interconnected discrete-time stochastic control systems. Our abstraction framework is based on a new notion of so-called stochastic simulation functions under which an abstraction, which is itself a discrete-time stochastic control system with lower dimension, performs as a substitute in the controller design process. The stochastic simulation function is leveraged to quantify the error in probability in this controller synthesis scheme.  As a consequence, one can use the proposed results here to solve particularly safety/reachability problems over the abstract interconnected systems and then carry the results over the concrete interconnected ones. It should be noted that the existing compositional results in \cite{zamani2014compositional, zamani2016approximations} are for \emph{continuous-time} stochastic systems and assume that the noises in the concrete and abstract systems are the same, which means the abstraction has access to the noise of the concrete system, which is a strong assumption. In this paper, we do not have such an assumption meaning that the noises of the abstraction can be completely independent of that of the concrete system. 


\section{Discrete-Time Stochastic Control Systems}

\subsection{Preliminaries}
We consider a probability space $(\Omega,\mathcal F_{\Omega},\PP_{\Omega})$,
where $\Omega$ is the sample space,
$\mathcal F_{\Omega}$ is a sigma-algebra on $\Omega$ comprising subsets of $\Omega$ as events,
and $\PP_{\Omega}$ is a probability measure that assigns probabilities to events.
We assume that random variables introduced in this article are measurable functions of the form $X:(\Omega,\mathcal F_{\Omega})\rightarrow (S_X,\mathcal F_X)$.
Any random variable $X$ induces a probability measure on  its space $(S_X,\mathcal F_X)$ as $Prob\{A\} = \PP_{\Omega}\{X^{-1}(A)\}$ for any $A\in \mathcal F_X$.
We often directly discuss the probability measure on $(S_X,\mathcal F_X)$ without explicitly mentioning the underlying probability space and the function $X$ itself.

A topological space $S$ is called a Borel space if it is homeomorphic to a Borel subset of a Polish space (i.e., a separable and completely metrizable space).
Examples of a Borel space are the Euclidean spaces $\mathbb R^n$, its Borel subsets endowed with a subspace topology, as well as hybrid spaces.
Any Borel space $S$ is assumed to be endowed with a Borel sigma-algebra, which is
denoted by $\mathcal B(S)$. We say that a map $f : S\rightarrow Y$ is measurable whenever it is Borel measurable.


\subsection{Notation}
The following notation is used throughout the paper. We denote the set of nonnegative integers by $\mathbb N := \{0,1,2,\ldots\}$ and the set of positive integers by $\mathbb Z_+ := \{1,2,3,\ldots\}$.
The symbols $\R$, $\R_{>0}$, and $\R_{\ge 0}$ denote the set of real, positive, and nonnegative real numbers, respectively. Given a vector $x\in\mathbb{R}^{n}$, $\Vert x\Vert$ denotes the Euclidean norm of $x$. The symbols $I_n$ and $\mathbf{1}_{n}$ denote the identity matrix in $\R^{n\times{n}}$, and the vector in $\R^n$ with all its elements to be one, respectively. We denote by $\mathsf{diag}(a_1,\ldots,a_N)$ a diagonal matrix in $\R^{N\times{N}}$ with diagonal matrix entries $a_1,\ldots,a_N$ starting from the upper left corner. Given functions $f_i:X_i\rightarrow Y_i$,
for any $i\in\mathbb\{1,\ldots,N\}$, their Cartesian product $\prod_{i=1}^{N}f_i:\prod_{i=1}^{N}X_i\rightarrow\prod_{i=1}^{N}Y_i$ is defined as $(\prod_{i=1}^{N}f_i)(x_1,\ldots,x_N)=[f_1(x_1);\ldots;f_N(x_N)]$.
For any set $A$ we denote by $A^{\mathbb N}$ the Cartesian product of a countable number of copies of $A$, i.e., $A^{\mathbb N} = \prod_{k=0}^{\infty} A$.
Given a measurable function $f:\mathbb N\rightarrow\mathbb{R}^n$, the (essential) supremum of $f$ is denoted by $\Vert f\Vert_{\infty} \Let \text{(ess)sup}\{\Vert f(k)\Vert,k\geq 0\}$. A function $\gamma:\mathbb\R_{0}^{+}\rightarrow\mathbb\R_{0}^{+}$, is said to be a class $\mathcal{K}$ function if it is continuous, strictly increasing, and $\gamma(0)=0$. A class $\mathcal{K}$ function $\gamma(\cdot)$ is said to be a class $\mathcal{K}_{\infty}$ if $\gamma(r)\rightarrow\infty$ as $r\rightarrow\infty$.

\subsection{Discrete-Time Stochastic Control Systems}
We consider stochastic control systems in discrete time (dt-SCS) defined over a general state space adopted from \cite{hll1996} and characterized by the tuple
\begin{equation}
\label{eq:dt-SCS}
\Sigma =\left(X,W,U,\{U(x,\omega)|x\in X,\omega\in W\},Y,T_{\mathsf x},h\right),
\end{equation}
where $X$ is a Borel space as the state space of the system.
We denote by $(X, \mathcal B (X))$ the measurable space
with $\mathcal B (X)$  being  the Borel sigma-algebra on the state space. Sets
$W$ and $U$ are Borel spaces as the \emph{internal} and \emph{external} input spaces of the system.
The set $\{U(x,\omega)|x\in X,\omega\in W\}$ is a family of non-empty measurable subsets of $U$ with the property that
\begin{equation*}
K :=\{(x,\omega,\nu): x\in X,\omega \in W, \nu\in U(x)\}
\end{equation*}
is measurable in $X\times W\times U$. Intuitively, $U(x,\omega)$ is the set of inputs that are feasible at state $x\in X$ with the internal input $\omega\in W$. Set $Y$ is a Borel space as the output space of the system. Map $T_{\mathsf x}:\mathcal B(X)\times X\times W\times U\rightarrow[0,1]$,
is a conditional stochastic kernel that assigns to any $x \in X$, $\omega\in W$ and $\nu\in U(x,\omega)$ a probability measure $T_{\mathsf x}(\cdot | x,\omega,\nu)$
on the measurable space
$(X,\mathcal B(X))$
so that for any set $A \in \mathcal B(X), \PP_{x,\omega,\nu}(A) = \int_A T_{\mathsf x} (d\bar x|x,\omega,\nu)$, 
where $\PP_{x,\omega,\nu}$ denotes the conditional probability $\mathbb P(\cdot|x,\omega,\nu)$.
Finally,
$h:X\rightarrow Y$ is a measurable function that maps a state $x\in X$ to its output $y = h(x)\in Y$.


Given the dt-SCS in \eqref{eq:dt-SCS}, we are interested in \emph{Markov policies} to control the system.
\begin{definition}
A Markov policy for the dt-SCS $\Sigma$ in \eqref{eq:dt-SCS} is a sequence
$\rho = (\rho_0,\rho_1,\rho_2,\ldots)$ of universally measurable stochastic kernels $\rho_n$ \cite{BS96},
each defined on the input space $U$ given $X\times W$ and such that for all $(x_n,\omega_n)\in X\times W$, $\rho_n(U(x_n,\omega_n)|(x_n,\omega_n))=1$.
The class of all Markov policies is denoted by $\Pi_M$. 
\end{definition} 
%

For given inputs $\omega(\cdot),\nu(\cdot)$, the stochastic kernel $T_{\mathsf x}$ captures the evolution of the state of the system.
This kernel features an equivalent dynamical representation: there exists a measurable function $f_{a}:X\times W\times U\times V_{\varsigma} \rightarrow X$ such that the evolution of the state of the system can be written as
\begin{equation*}
x(k+1) = f_a(x(k),\omega(k),\nu(k),\varsigma(k)),
\end{equation*}
where $\{\varsigma(k):\Omega\rightarrow V_{\varsigma},\,\,k\in\N\}$ is a sequence of independent and identically distributed (i.i.d.) random variables on the set $V_\varsigma$.
In this paper we assume that the state space $X$ is a subset of $\mathbb R^n$ and are interested in the specific form of the function
$$f_a(x,\omega,\nu,\varsigma) = f(x,\omega,\nu)+g(x)\varsigma. $$
Therefore, the dt-SCS $\Sigma$ in \eqref{eq:dt-SCS} can be described as:
\begin{align}\label{Eq1a}
\Sigma:\left\{\hspace{-1.5mm}\begin{array}{l}x(k+1)=f(x(k),\omega(k),\nu(k))+g(x(k))\varsigma(k),\\
y(k)=h(x(k)),\end{array}\right.
\end{align}
for any $x(k)\in X, \omega(k)\in W,$ and $\nu(k)\in U(x(k),\omega(k))$.
Note that $T_{\mathsf x}$ in \eqref{eq:dt-SCS} contains the information of functions $f$ and $g$ and the distribution of noise $\varsigma(\cdot)$ in the dynamical representation \eqref{Eq1a}.


For the sake of simplicity, we also assume that the set of valid inputs is the whole input space: $U(x,\omega) = U$ for all $x\in X$ and $\omega\in W$, but the obtained results are generally applicable.
We associate respectively to $U$ and $W$ the sets $\mathcal U$ and $\mathcal W$ to be collections of sequences $\{\nu(k):\Omega\rightarrow U,\,\,k\in\N\}$ and $\{\omega(k):\Omega\rightarrow W,\,\,k\in\N\}$, in which $\nu(k)$ and $\omega(k)$ are independent of $\varsigma(t)$ for any $k,t\in\mathbb N$ and $t\ge k$.


For any initial state $a\in X$, $\nu(\cdot)\in\mathcal{U}$, and $\omega(\cdot)\in\mathcal{W}$,
the random sequences $x_{a\omega\nu}:\Omega \times\N \rightarrow X$  and $y_{a\omega\nu}:\Omega \times \N \rightarrow Y$ that satisfy \eqref{Eq1a}
are called respectively the \textit{solution process} and \textit{output trajectory} of $\Sigma$ under external input $\nu$, internal input $\omega$ and initial state $a$.
We here call the tuple $(\omega,\nu,x_{a\omega\nu},y_{a\omega\nu})$ a \textit{trajectory}
of $\Sigma$.

\section{Stochastic Pseudo-Simulation and Simulation Functions}
In this section we first introduce a notion of so-called pseudo-simulation functions for the discrete-time stochastic control systems with both internal and external inputs and then define the stochastic simulation functions for systems with only external input. These definitions can be used to quantify closeness of two dt-SCS with the same internal input and output spaces.


\begin{definition}\label{Def1}
Consider dt-SCS
$\Sigma =(X,W,U,Y,T_{\mathsf x},h)$ and
$\widehat\Sigma =(\hat X,W,\hat U,Y,\hat T_{\mathsf x},\hat h)$
with the same internal input and output spaces.
A function $V:X\times\hat X\to\R_{\ge0}$ is
called a stochastic pseudo-simulation function (SPSF) from  $\widehat\Sigma$ to $\Sigma$ if
\begin{itemize}
\item  $\exists \alpha\in \mathcal{K}_{\infty}$ such that
\begin{eqnarray}
\forall x\in X,\forall \hat x\in\hat X,\quad \alpha(\Vert h(x)-\hat h(\hat x)\Vert)\le V(x,\hat x),\label{eq:V_dec1}
\end{eqnarray}
\item $\forall x\in X,\hat x\in\hat X,\hat\nu\in\hat U$, and $\forall \hat{\omega}\in\hat W$, $\exists \nu\in U$ such that $\forall \omega\in W$
\begin{align}\notag
&\EE \Big[V(x(k+1),\hat{x}(k+1))\,\big|\,x(k),\hat{x}(k), \omega(k)=\omega,\\\notag
\,\,\,&,\hat{\omega}(k)=\hat{\omega},\nu(k)=\nu,\hat{\nu}(k)=\hat{\nu}\Big]-V(x(k),\hat{x}(k))\leq\\
&\!-\!\kappa(V(x(k),\hat{x}(k)))\!+\! \rho_{\mathrm{int}}(\Vert\omega-\hat\omega\Vert)\!+\!\rho_{\mathrm{ext}}(\Vert\hat\nu\Vert)\!+\!\psi,
\label{eq:V_dec}
\end{align}
for some $\kappa\in \mathcal{K}$, $\rho_{\mathrm{int}},\rho_{\mathrm{ext}} \in \mathcal{K}_{\infty}\cup \{0\}$, and $\psi \in\R_{\ge 0}$.
\end{itemize}
\end{definition}

We utilize notation $\widehat\Sigma\preceq_{\mathcal{PS}}\Sigma$ if there exists a pseudo-simulation function $V$ from $\widehat\Sigma$ to $\Sigma$, in which control system  $\widehat\Sigma$ is  considered as an abstraction of concrete (original) system $\Sigma$.
The second condition above implies implicitly existence of a function $\nu=\nu_{\hat \nu}(x,\hat x,\hat \nu,\hat \omega)$ for satisfaction of \eqref{eq:V_dec}. This function is called the \emph{interface function} and can be used to refine a synthesized policy $\hat\nu$ for $\widehat\Sigma$ to a policy $\nu$ for $\Sigma$.

In this paper we study interconnected discrete-time stochastic control systems without internal inputs, resulting from the interconnection of discrete-time stochastic control subsystems having both internal and external signals. In this case, the interconnected dt-SCS reduces to the tuple $(X,U,Y,T_{\mathsf x},h)$.  Thus we modify the above notion for systems without internal inputs.


\begin{definition}\label{Def2}
Consider two dt-SCS
$\Sigma =(X,U,Y,T_{\mathsf x},h)$ and
$\widehat\Sigma =(\hat X,\hat U,Y,\hat T_{\mathsf x},\hat h)$
with the same output spaces.
A function $V:X\times\hat X\to\R_{\ge0}$ is
called a \emph{stochastic simulation function} (SSF) from $\widehat\Sigma$  to $\Sigma$ if
\begin{itemize}
\item $\exists \alpha\in \mathcal{K}_{\infty}$ such that
\begin{eqnarray}
\label{eq:lowerbound2}
\forall x\in X,\forall \hat x\in\hat X,\quad \alpha(\Vert h(x)-\hat h(\hat x)\Vert)\le V(x,\hat x),
\end{eqnarray}
\item $\forall x\in X,\hat x\in\hat X,\hat\nu\in\hat U$, $\exists \nu\in U$ such that
\begin{align}
&\EE \Big[V(x(k\!+\!1),\hat{x}(k\!+\!1))\big|x(k),\hat{x}(k),\nu(k)\!\!=\!\!\nu, \hat{\nu}(k)\!\!=\!\!\hat{\nu}\Big]\nonumber\\
&\!-\!V(x(k),\hat{x}(k))\!\leq\!\!-\kappa(V(x(k),\hat{x}(k)))
\!+\!\rho_{\mathrm{ext}}(\Vert\hat\nu\Vert)\!+\!\psi,\label{eq:martingale2}
\end{align}
for some $\kappa\in \mathcal{K}$, $\rho_{\mathrm{ext}} \in \mathcal{K}_{\infty}\cup \{0\}$, and $\psi \in\R_{\ge 0}$.
\end{itemize}
\end{definition}


The next theorem shows usefulness of SSF in comparing output trajectories of two dt-SCS in a probabilistic sense.  
\begin{theorem}\label{Thm1}
Let $\Sigma$ and $\widehat\Sigma$ be two dt-SCS with the same output spaces.
Suppose $V$ is an SSF from $\widehat\Sigma$ to $\Sigma$, and there exists a constant $0<\widehat\kappa<1$ such that the function $\kappa \in \mathcal{K}$ in \eqref{eq:martingale2} satisfies $\kappa(r)\geq\widehat\kappa r$ $\forall r\in\R_{\geq0}$. For any external input trajectory $\hat\nu(\cdot)\in\mathcal{\hat U}$ that preserves Markov property for the closed-loop $\widehat\Sigma$, and for any random variables $a$ and $\hat a$ as the initial states of the two dt-SCS,
there exists an input trajectory $\nu(\cdot)\in\mathcal{U}$ of $\Sigma$ through the interface function associated with $V$ such that the following inequality holds provided that there exists a constant $\widehat\psi\geq0$ satisfying  $\widehat\psi\geq \rho_{\mathrm{ext}}(\Vert\hat \nu\Vert_{\infty})+\psi$:

\begin{align}\label{Eq2a}
&\PP\left\{\sup_{0\leq k\leq T}\Vert y_{a\nu}(k)-\hat y_{\hat a \hat\nu}(k)\Vert\geq\varepsilon\,|\,[a;\hat a]\right\}\\\notag
&\leq
\begin{cases}
1-\Big(1-\frac{V(a,\hat a)}{\alpha\left(\varepsilon\right)}\Big)\Big(1-\frac{\widehat\psi}{\alpha\left(\varepsilon\right)}\Big)^{T} & \!\!\text{if}~\alpha\left(\varepsilon\right)\!\geq\!\frac{\widehat\psi}{\widehat\kappa},\\
\Big(\frac{V(a,\hat a)}{\alpha\left(\varepsilon\right)}\Big)\!(1\!-\!\widehat\kappa)^T\!\!+\!\Big(\frac{\widehat\psi}{\widehat\kappa\alpha\left(\varepsilon\right)}\Big)\!(1\!-\!(1\!-\!\widehat\kappa)^T) & \!\!\text{if}~\alpha\left(\varepsilon\right)\!<\!\frac{\widehat\psi}{\widehat\kappa}.
\end{cases}
\end{align}
\end{theorem}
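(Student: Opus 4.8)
The plan is to reduce the statement to a finite-horizon maximal inequality for the scalar nonnegative process $\Phi_k := V(x_{a\nu}(k),\hat x_{\hat a\hat\nu}(k))$, $k\in\{0,\dots,T\}$, adapted to the natural filtration $\{\mathcal F_k\}$ generated by the joint solution process $(x_{a\nu},\hat x_{\hat a\hat\nu})$, everything being understood conditionally on the initial states $[a;\hat a]$. The first, easy step is the \emph{output-to-$V$ reduction}: since $y_{a\nu}(k)=h(x_{a\nu}(k))$ and $\hat y_{\hat a\hat\nu}(k)=\hat h(\hat x_{\hat a\hat\nu}(k))$, the lower bound \eqref{eq:lowerbound2} together with strict monotonicity of $\alpha\in\mathcal K_\infty$ gives the event inclusion $\{\sup_{0\le k\le T}\Vert y_{a\nu}(k)-\hat y_{\hat a\hat\nu}(k)\Vert\ge\varepsilon\}\subseteq\{\sup_{0\le k\le T}\Phi_k\ge\alpha(\varepsilon)\}$. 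Hence it suffices to bound the crossing probability of $\Phi_k$ for the threshold $\lambda:=\alpha(\varepsilon)$, after which \eqref{Eq2a} follows with $V(a,\hat a)$ playing the role of $\Phi_0$.

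Second, I would turn $\Phi_k$ into an \emph{affine supermartingale}. At each step I select $\nu(k)$ through the interface function associated with $V$, so that \eqref{eq:martingale2} holds pointwise; since $\hat\nu(\cdot)$ preserves the Markov property of the closed loop, the joint process is Markov and the conditional expectation in \eqref{eq:martingale2} is exactly $\EE[\Phi_{k+1}\mid\mathcal F_k]$. Combining \eqref{eq:martingale2} with $\kappa(r)\ge\widehat\kappa r$, with $\widehat\psi\ge\rho_{\mathrm{ext}}(\Vert\hat\nu\Vert_\infty)+\psi$, and with monotonicity of $\rho_{\mathrm{ext}}$ yields the single recursion
\[ \EE[\Phi_{k+1}\mid\mathcal F_k]\le(1-\widehat\kappa)\,\Phi_k+\widehat\psi,\qquad \Phi_k\ge0. \]
The deterministic map $v\mapsto(1-\widehat\kappa)v+\widehat\psi$ has the unique fixed point $\widehat\psi/\widehat\kappa$, which is precisely the quantity that separates the two cases of \eqref{Eq2a}.

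Third, and this is the technical heart, I would prove the maximal inequality; the difficulty is that one must control $\sup_{0\le k\le T}\Phi_k$ rather than the terminal $\Phi_T$, and $\Phi_k$ is a genuine supermartingale only on $\{\Phi_k\ge\widehat\psi/\widehat\kappa\}$. Let $\tau:=\inf\{k:\Phi_k\ge\lambda\}$, so $\{\sup_{k\le T}\Phi_k\ge\lambda\}=\{\tau\le T\}$. In the supercritical regime $\lambda\ge\widehat\psi/\widehat\kappa$ I would set $\rho:=1-\widehat\psi/\lambda\in(0,1)$ and show that $S_k:=(\lambda-\Phi_{k\wedge\tau})\,\rho^{-(k\wedge\tau)}$ is a submartingale; the key algebraic check $\EE[\lambda-\Phi_{k+1}\mid\mathcal F_k]\ge\rho(\lambda-\Phi_k)$ on $\{k<\tau\}$ holds exactly when $1-\widehat\kappa\le\rho$, i.e.\ when $\lambda\ge\widehat\psi/\widehat\kappa$. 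Then $S_0=\lambda-\Phi_0\le\EE[S_T]\le\lambda\,\rho^{-T}\,\PP[\tau>T]$ (using $S_T\le0$ on $\{\tau\le T\}$ and $\Phi_T\ge0$ on $\{\tau>T\}$), which rearranges into $\PP[\tau>T]\ge(1-\Phi_0/\lambda)(1-\widehat\psi/\lambda)^T$, i.e.\ the first branch. In the subcritical regime $\lambda<\widehat\psi/\widehat\kappa$ this product supermartingale is unavailable, so instead I would run a backward dynamic-programming induction on the concave, nondecreasing envelope $g(v):=\min(v/\lambda,1)$ of $\ind_{[\lambda,\infty)}$: a one-step conditioning together with Jensen's inequality (concavity of $g$) and the recursion above gives, for $v<\lambda$, a bound $F_{k+1}(v)\le g((1-\widehat\kappa)v+\widehat\psi)$ on the $k{+}1$-step crossing probability from state value $v$; the case condition guarantees $(1-\widehat\kappa)\lambda+\widehat\psi\ge\lambda$, which is exactly what keeps the iterate concave and nondecreasing so the induction closes, yielding $\PP[\sup_{k\le T}\Phi_k\ge\lambda]\le g(v_T)\le v_T/\lambda$ with $v_T=(1-\widehat\kappa)^T\Phi_0+(\widehat\psi/\widehat\kappa)(1-(1-\widehat\kappa)^T)$, the second branch.

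Finally, substituting $\lambda=\alpha(\varepsilon)$ and $\Phi_0=V(a,\hat a)$ into the two branches and invoking the event inclusion of the first paragraph delivers \eqref{Eq2a}. The main obstacle I anticipate is the third step: setting up the stopped process and verifying the submartingale/concavity algebra so that the fixed point $\widehat\psi/\widehat\kappa$ cleanly separates the two regimes. The remaining technical points, to be handled with care but expected to be routine, are the measurability of the interface-based input $\nu(\cdot)$ and the justification, via $\hat\nu$ preserving the Markov property, that the conditional expectation in \eqref{eq:martingale2} may be identified with $\EE[\,\cdot\mid\mathcal F_k]$.
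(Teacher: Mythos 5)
Your proposal is correct, and it diverges from the paper in exactly one place: the core probabilistic estimate. The paper's own proof consists of your first step essentially verbatim --- the event inclusion $\{\sup_{0\le k\le T}\Vert y_{a\nu}(k)-\hat y_{\hat a\hat\nu}(k)\Vert\ge\varepsilon\}\subseteq\{\sup_{0\le k\le T}V(x_{a\nu}(k),\hat x_{\hat a\hat\nu}(k))\ge\alpha(\varepsilon)\}$ obtained from \eqref{eq:lowerbound2} and the fact that $\alpha\in\mathcal K_\infty$ --- followed by a citation: the two-branch bound \eqref{Eq2a} is imported wholesale from Theorem~3 on p.~81 of Kushner's 1967 book on stochastic stability, applied to the process $\Phi_k=V(x_{a\nu}(k),\hat x_{\hat a\hat\nu}(k))$ satisfying $\EE[\Phi_{k+1}\mid\mathcal F_k]\le(1-\widehat\kappa)\Phi_k+\widehat\psi$, which is precisely your affine recursion. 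What you do differently is reprove that classical maximal inequality from scratch, and both of your arguments check out: in the regime $\alpha(\varepsilon)\ge\widehat\psi/\widehat\kappa$ the submartingale algebra for $S_k=(\lambda-\Phi_{k\wedge\tau})\rho^{-(k\wedge\tau)}$ is sound, since on $\{k<\tau\}$ one has $\EE[\lambda-\Phi_{k+1}\mid\mathcal F_k]-\rho(\lambda-\Phi_k)\ge\Phi_k(\widehat\kappa-\widehat\psi/\lambda)\ge0$, and the optional-sampling bookkeeping ($S_T\le0$ on $\{\tau\le T\}$, $S_T\le\lambda\rho^{-T}$ otherwise) yields exactly the first branch; in the regime $\alpha(\varepsilon)<\widehat\psi/\widehat\kappa$ your backward induction with $g(v)=\min(v/\lambda,1)$ and $Av:=(1-\widehat\kappa)v+\widehat\psi$ closes because the case condition gives $A\lambda\ge\lambda$, hence $A^nv\ge\lambda$ and $g(A^nv)=1$ whenever $v\ge\lambda$, while Jensen's inequality handles $v<\lambda$; since $A^T\Phi_0=(1-\widehat\kappa)^T\Phi_0+(\widehat\psi/\widehat\kappa)(1-(1-\widehat\kappa)^T)$, the bound $g(A^T\Phi_0)\le A^T\Phi_0/\lambda$ is the second branch. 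Two cosmetic points: your remark that the case condition ``keeps the iterate concave and nondecreasing'' is slightly misplaced --- concavity and monotonicity of $g\comp A^n$ are automatic, and the condition is really needed so that the iterate dominates the indicator of $[\lambda,\infty)$ at starting values $v\ge\lambda$, which is where the induction hypothesis must apply to a possibly large $\Phi_1$; also $\rho\in(0,1]$ rather than $(0,1)$, since $\widehat\psi=0$ is allowed (and then the argument degenerates to the standard supermartingale inequality, still valid). As for what each route buys: the paper's citation gives a two-line proof but leaves the reader to consult Kushner and match his hypotheses to the SSF setting, whereas your version is self-contained, makes visible why the fixed point $\widehat\psi/\widehat\kappa$ of $A$ is exactly the threshold separating the two branches of \eqref{Eq2a}, and explicitly flags the Markov-policy and measurability caveats that the paper passes over silently.
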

\begin{proof}
Since $V$ is an SSF from $\widehat\Sigma$ to $\Sigma$, we have
\begin{align}
\PP&\left\{\sup_{0\leq k\leq T}\Vert y_{a\nu}(k)-\hat y_{\hat a \hat\nu}(k)\Vert\geq\varepsilon\,|\,[a;\hat a]\right\}\nonumber\\
=\PP&\left\{\sup_{0\leq k\leq T}\alpha\left(\Vert y_{a\nu}(k)-\hat y_{\hat a \hat\nu}(k)\Vert\right)\geq\alpha(\varepsilon)\,|\,[a;\hat a]\right\}\nonumber\\
\leq\PP&\left\{\sup_{0\leq k\leq T}V\left(x_{a\nu}(k),\hat x_{\hat a \hat\nu}(k)\right)\geq\alpha(\varepsilon)\,|\,[a;\hat a]\right\}.\label{eq:supermart}
\end{align}
The equality holds due to $\alpha$ being a $\mathcal K_\infty$ function. The inequality  is also true due to condition \eqref{eq:lowerbound2} on the SSF $V$. The results follows by applying Theorem 3 in \cite[pp. 81]{1967stochastic} to \eqref{eq:supermart} and utilizing inequality \eqref{eq:martingale2}.
\end{proof}

The results shown in Theorem \ref{Thm1} provide closeness of output behaviours of two systems in finite-time horizon. We can extend the result to infinite-time horizon provided that constant $\hat{\psi}=0$ as the following. 
\begin{corollary}
Let $\Sigma$ and $\widehat\Sigma$ be two dt-SCS with the same output spaces.
Suppose $V$ is an SSF from $\widehat\Sigma$ to $\Sigma$ such that $\rho_{\mathrm{ext}}(\cdot)\equiv0$ and $\psi = 0$.
For any external input trajectory $\hat\nu(\cdot)\in\mathcal{\hat U}$ preserving Markov property for the closed-loop $\widehat \Sigma$, and for any random variables $a$ and $\hat{a}$ as the initial states of the two dt-SCS, there exists $\nu(\cdot)\in{\mathcal{U}}$ of $\Sigma$ through the interface function associated with $V$ such that the following inequality holds:
\begin{align}\nonumber
\PP\left\{\sup_{0\leq k< \infty}\Vert y_{a\nu}(k)-\hat y_{\hat a 0}(k)\Vert\ge \varepsilon\,|\,[a;\hat a]\right\}\leq\frac{V(a,\hat a)}{\alpha\left(\varepsilon\right)}.
\end{align}
\end{corollary}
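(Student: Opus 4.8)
The plan is to specialize Theorem~\ref{Thm1} to the degenerate regime $\rho_{\mathrm{ext}}\equiv0$, $\psi=0$ and then let the horizon $T$ grow without bound. First I would note that under these hypotheses the admissible constant of Theorem~\ref{Thm1} can be taken as $\widehat\psi=0$, since the requirement $\widehat\psi\ge\rho_{\mathrm{ext}}(\Vert\hat\nu\Vert_\infty)+\psi=0$ is then trivially met. Hence for every $\varepsilon>0$ we have $\alpha(\varepsilon)>0=\widehat\psi/\widehat\kappa$, so the first branch of the estimate in Theorem~\ref{Thm1} is the relevant one, and inserting $\widehat\psi=0$ collapses it to
\begin{align*}
\PP\Big\{\sup_{0\le k\le T}\Vert y_{a\nu}(k)-\hat y_{\hat a0}(k)\Vert\ge\varepsilon\,\big|\,[a;\hat a]\Big\}&\le 1-\Big(1-\tfrac{V(a,\hat a)}{\alpha(\varepsilon)}\Big)(1-0)^{T}=\frac{V(a,\hat a)}{\alpha(\varepsilon)},
\end{align*}
a bound that no longer depends on the horizon $T$.

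From here I see two routes to the infinite-horizon claim, and I would lead with the cleaner one. Since $\rho_{\mathrm{ext}}\equiv0$, $\psi=0$, and $\kappa\in\mathcal{K}$ is nonnegative, inequality~\eqref{eq:martingale2} shows that the process $Z_k:=V(x_{a\nu}(k),\hat x_{\hat a0}(k))$, generated by the interface input associated with $V$, satisfies $\EE[Z_{k+1}\mid Z_0,\dots,Z_k]\le Z_k$; that is, $\{Z_k\}_{k\ge0}$ is a nonnegative supermartingale with $Z_0=V(a,\hat a)$. The maximal inequality for nonnegative supermartingales --- the infinite-horizon form of the very Theorem~3 of \cite{1967stochastic} invoked to prove Theorem~\ref{Thm1} --- then gives $\PP\{\sup_{0\le k<\infty}Z_k\ge\alpha(\varepsilon)\mid[a;\hat a]\}\le V(a,\hat a)/\alpha(\varepsilon)$. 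Reproducing the opening chain of the proof of Theorem~\ref{Thm1}, namely that $\alpha$ is a $\mathcal{K}_\infty$ function and that \eqref{eq:lowerbound2} bounds the output error from below by $V$, shows that the output-distance event is contained in $\{\sup_k Z_k\ge\alpha(\varepsilon)\}$, and the claimed bound follows. The alternative route simply passes to the limit in the display above: the events $A_T:=\{\sup_{0\le k\le T}\Vert y_{a\nu}(k)-\hat y_{\hat a0}(k)\Vert\ge\varepsilon\}$ increase with $T$, so continuity from below of $\PP$ yields $\PP\{\bigcup_T A_T\mid[a;\hat a]\}=\lim_{T\to\infty}\PP\{A_T\mid[a;\hat a]\}\le V(a,\hat a)/\alpha(\varepsilon)$.

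The step I expect to be the main obstacle is precisely this passage to the limit: one must verify that the monotone union $\bigcup_T A_T$ coincides with the infinite-horizon supremum event in the statement, the only delicate point being whether the supremum is attained at a finite time (the distinction between $\ge$ and $>$ on the boundary). This is exactly the bookkeeping that the direct supermartingale maximal inequality avoids, since that inequality is stated for $\sup_{0\le k<\infty}$ with a nonstrict threshold; for this reason I would present the supermartingale argument as the primary proof and keep the limiting computation only as corroboration.
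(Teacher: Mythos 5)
Your primary argument is exactly the paper's proof: it observes that with $\rho_{\mathrm{ext}}\equiv0$ and $\psi=0$ inequality \eqref{eq:martingale2} makes $V(x_{a\nu}(k),\hat x_{\hat a\hat\nu}(k))$ a nonnegative supermartingale, repeats the opening chain from the proof of Theorem~\ref{Thm1} (using that $\alpha\in\mathcal K_\infty$ and \eqref{eq:lowerbound2}), and concludes via the infinite-horizon maximal inequality for nonnegative supermartingales. Your choice to lead with this route and demote the $T\to\infty$ limiting computation to corroboration is sound, and your flagged concern there is real --- the monotone union of the finite-horizon events captures only sample paths where the threshold is attained at some finite time, which is precisely what the direct supermartingale inequality avoids.
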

\begin{proof}
Since $V$ is an SSF from $\widehat\Sigma$ to $\Sigma$ with $\rho_{\mathrm{ext}}(\cdot)\equiv0$ and $\psi = 0$, for any $x(k)\in X$ and $\hat x(k)\in\hat X$ and any $\hat\nu(k)\in\hat U$, there exists $\nu(k)\in U$ such that
\begin{align}\notag
&\EE \Big[V(x(k+1),\hat x(k+1))\,|\,x(k),\hat x(k),\nu(k), \hat \nu(k)\Big] \\
&-V((x(k),\hat x(k))\leq-\kappa (V(x(k),\hat x(k)),
\end{align}
showing that $V\left(x_{a\nu}(k),\hat x_{\hat a \hat\nu}(k)\right)$ is a nonnegative supermartingale \cite{oksendal2013stochastic}.
Following the same reasoning as in the proof of Theorem  \ref{Thm1} we have
\begin{align}\nonumber
\PP&\left\{\sup_{0\leq k< \infty}\Vert y_{a\nu}(k)-\hat y_{\hat a \hat \nu}(k)\Vert\ge \varepsilon\,|\,[a;\hat a]\right\}\\\notag
=\PP&\left\{\sup_{0\leq k< \infty}\alpha\Big(\Vert y_{a\nu}(k)-\hat y_{\hat a \hat \nu}(k)\Vert\Big)\ge \alpha(\varepsilon)\,|\,[a;\hat a]\right\}\\\notag
\leq\PP&\left\{\sup_{0\leq k< \infty}V(x_{a\nu}(k),\hat x_{\hat a\hat \nu}(k))\ge \alpha(\varepsilon)\,|\,[a;\hat a]\right\}\!\leq\!\!\frac{V(a,\hat a)}{\alpha(\varepsilon)},
\end{align}
where the last inequality is due to the nonnegative supermartingale property \cite{1967stochastic}.
\end{proof}


The stochastic simulation function defined before can be used to guarantee an upper bound on the probability of the maximum difference in output trajectories.
This idea can be used in conjunction with stochastic safety/reachability analysis of the systems, which is discussed next.

Suppose that $V$ is a stochastic simulation function from $\widehat{\Sigma}$ to $\Sigma$.
Then for any input strategy $\hat{\nu}$ of the system $\widehat\Sigma$ there exists an input strategy $\nu$ of $\Sigma$, such that the following probability is bounded
\begin{equation*}
\PP\left\{\sup_{0\leq k\leq T}\Vert y_{a\nu}(k)-\hat y_{\hat a \hat\nu}(k)\Vert\geq\varepsilon\,|\,[a;\hat a]\right\}\le \delta,
\end{equation*}
with $\delta$ being defined in Theorem \ref{Thm1} based on $\varepsilon$ and $T$.
Given the unsafe set $A_1$ for $\Sigma$, we can construct another set $A_2$, which is the $\varepsilon$ neighborhood of $A_1$, i.e.,
\begin{equation*}
A_2 = \{y'|\exists y\in A_1, \|y'-y\|\le\varepsilon \}.
\end{equation*} 
Now, we can provide the following corollary.

\begin{corollary}
Suppose $V$ is an SSF from $\widehat{\Sigma}$ to $\Sigma$.
For any input $\hat{\nu}(\cdot)$ there exists $\nu(\cdot)$ such that the following inequality holds:
\begin{equation*}
\PP\{\exists k\le T,\,\,y_{a\nu}(k)\in A_1\}\le \PP\{\exists k\le T,\,\,\hat y_{\hat a \hat \nu}(k)\in A_2\}+\delta.
\end{equation*}
\end{corollary}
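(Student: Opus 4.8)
The plan is to reduce the safety comparison to the trajectory-closeness bound of Theorem~\ref{Thm1} via a decomposition of the sample space according to whether the two output trajectories remain $\varepsilon$-close over the horizon. First I would fix the external input $\hat\nu(\cdot)$ of $\widehat\Sigma$ and the initial states $a,\hat a$, and invoke Theorem~\ref{Thm1} to obtain the refined input $\nu(\cdot)$ of $\Sigma$ (through the interface function associated with $V$) for which
\[
\PP\Big\{\sup_{0\le k\le T}\Vert y_{a\nu}(k)-\hat y_{\hat a\hat\nu}(k)\Vert\ge\varepsilon\,\big|\,[a;\hat a]\Big\}\le\delta .
\]
I denote this large-deviation event by $B$, so that $\PP\{B\}\le\delta$, and it is exactly this $\nu$ that I claim satisfies the corollary.

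Next I would decompose the target event $E:=\{\exists k\le T,\ y_{a\nu}(k)\in A_1\}$ as $E=(E\cap B^c)\cup(E\cap B)$ and bound the two pieces separately. The key geometric observation is that on $B^c$ one has $\Vert y_{a\nu}(k)-\hat y_{\hat a\hat\nu}(k)\Vert<\varepsilon$ for \emph{every} $k\le T$; hence if $y_{a\nu}(k)\in A_1$ for some such $k$, then the abstract output $\hat y_{\hat a\hat\nu}(k)$ lies within distance $\varepsilon$ of a point of $A_1$ and therefore belongs to $A_2$ by the very definition $A_2=\{y'\mid\exists y\in A_1,\ \Vert y'-y\Vert\le\varepsilon\}$. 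This yields the event inclusion $E\cap B^c\subseteq\{\exists k\le T,\ \hat y_{\hat a\hat\nu}(k)\in A_2\}$.

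Finally I would combine the pieces: monotonicity of probability gives $\PP\{E\cap B^c\}\le\PP\{\exists k\le T,\ \hat y_{\hat a\hat\nu}(k)\in A_2\}$, while $\PP\{E\cap B\}\le\PP\{B\}\le\delta$; adding these and using subadditivity over the decomposition of $E$ delivers the claimed inequality. The argument is essentially measure-theoretic bookkeeping, so I do not expect a genuine obstacle; the only points requiring care are (i) that the same $\nu$ produced by Theorem~\ref{Thm1} is used both to control $B$ and to define $E$, and (ii) that the inclusion on $B^c$ is stated with the correct quantifier, namely that closeness must hold simultaneously at all times $k\le T$, which is precisely what the supremum defining $B$ guarantees.
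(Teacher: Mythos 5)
Your proof is correct and follows essentially the same route as the paper: both rest on the decomposition of the unsafe event and the inclusion (yours, $E\cap B^c\subseteq\mathcal E_2$; the paper's contrapositive form, $\mathcal E_1\cap\bar{\mathcal E}_2\subseteq B$) combined with the $\delta$-bound from Theorem~\ref{Thm1}. The only cosmetic difference is that you split on the deviation event $B$ while the paper splits on $\mathcal E_2$ and leaves the geometric inclusion implicit, which you spell out explicitly.
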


\begin{proof}
Denote the events $\mathcal E_1 := \{\exists k\le T,\,\,y_{a\nu}(k)\in A_1\}$ and
$\mathcal E_2 := \{\exists k\le T,\,\,\hat y_{\hat a \hat \nu}(k)\in A_2\}$.
Then we have
\begin{equation*}
\PP\{\mathcal E_1\} = \PP\{\mathcal E_1\cap\mathcal E_2\}+\PP\{\mathcal E_1\cap\bar{\mathcal E}_2\}
\le \PP\{\mathcal E_2\}+\PP\{\mathcal E_1\cap\bar{\mathcal E}_2\},
\end{equation*}
where $\bar{\mathcal E}_2$ is the complement of $\mathcal E_2$.
Notice that the term $\PP\{\mathcal E_1\cap\bar{\mathcal E}_2\}$ is bounded by $\delta$ due to the above results, which concludes the proof.
\end{proof}

\section{Compositional Abstractions for Interconnected Systems}
Here, we first provide a formal definition of interconnection between discrete-time stochastic control systems.

\subsection{Interconnected Stochastic Control Systems}
Consider a complex stochastic control system  $\Sigma$ composed of $N\in\mathbb N_{\geq1}$ stochastic control subsystems $\Sigma_i$ interconnected with each other as follows:  
\begin{eqnarray}\nonumber
\Sigma_i=(X_i,W_i,U_i,Y_i,T_{\mathsf x_i},h_i),
&i\in [1;N],
\end{eqnarray}
with partitioned internal inputs and outputs
\begin{align}\notag
\omega_i&=\intcc{\omega_{i1};\ldots;\omega_{i(i-1)};\omega_{i(i+1)};\ldots;\omega_{iN}},\\\label{config1}
y_i&=\intcc{y_{i1};\ldots;y_{iN}},
\end{align}
and also output space and function 
\begin{align}\notag
h_i(x_i)&=\intcc{h_{i1}(x_i);\ldots;h_{iN}(x_i)},\\\label{config2}
Y_i&=\prod_{j=1}^{N}Y_{ij}.
\end{align}
We interpret the outputs $y_{ii}$ as \emph{external} ones, whereas the outputs $y_{ij}$ with $i\neq j$ are \emph{internal} ones which are used to define the interconnected stochastic control systems. In particular, we assume that the dimension of $\omega_{ij}$ is equal to the dimension of $y_{ji}$. If there is no connection from stochastic control subsystem $\Sigma_{i}$ to $\Sigma_j$, then we assume that the connecting output function is identically zero for all arguments, i.e., $h_{ij}\equiv 0$. Now, we define the \emph{interconnected stochastic control systems} as the following.

\begin{definition}
Consider $N\in\N_{\geq1}$ stochastic control subsystems $\Sigma_i=(X_i,W_i,U_i,Y_i,T_{\mathsf x_i},h_i)$, $i\in[1;N]$, with the input-output configuration as in \eqref{config1} and \eqref{config2}. The
interconnection of  $\Sigma_i$ for any $i\in [1,\ldots,N]$, is the interconnected stochastic control system $\Sigma=(X,U,Y,T_{\mathsf x},h)$, denoted by
$\mathcal{I}(\Sigma_1,\ldots,\Sigma_N)$, such that $X:=\prod_{i=1}^{N}X_i$,  $U:=\prod_{i=1}^{N}U_i$, function $f_a:=\prod_{i=1}^{N}f_{ai}$, characterizing the stochastic kernel $T_{\mathsf x}$ based on those of subsystems (i.e. $f_{ai}$), $Y:=\prod_{i=1}^{N}Y_{ii}$, and $h=\prod_{i=1}^{N}h_{ii}$, subjected to the following constraint:
\begin{align}
\omega_{ij}=y_{ji} \hspace{0.2cm}\forall i,j\in[1;N],i\neq j.
\end{align}
\end{definition}
\subsection{Compositional Abstractions of Interconnected Systems}
This subsection contains one of the main contributions of the paper. We assume that we are given $N$ stochastic control subsystems $$\Sigma_i=(X_i,W_i,U_i,Y_i,T_{\mathsf x_i},h_i),$$ together with their corresponding abstractions $\widehat\Sigma_i=(\hat X_i,W_i,\hat U_i, Y_i,\hat T_{\mathsf x_i},\hat h_i)$  with SPSF $V_i$ from $\widehat\Sigma_i$ to $\Sigma_i$. 
For providing the main compositionality result of the paper, we raise the following assumption.

\begin{assumption}\label{Asum1}
For any $i,j\in[1;N]$, $i\neq j$, there exist $\mathcal{K}_\infty$ functions $\gamma_i$ and constants $\lambda_i\in\R_{>0}$ and $\delta_{ij}\in\R_{\geq0}$ such that for any $s\in\R_{\geq0}$
\begin{align}\label{assumption}
&\kappa_{i}(s) \geq\lambda_i\gamma_i(s)\\
&h_{ji}\equiv 0\implies \delta_{ij}=0 \text{ and }\\
&h_{ji}\not\equiv 0\implies \rho_{i\mathrm{int}}((N-1)\alpha_j^{-1}(s))\leq\delta_{ij}\gamma_j(s),\label{Con1a}
\end{align}
where $\kappa_i$, $\alpha_j$, and $\rho_{i\mathrm{int}}$ represent the corresponding  $\mathcal{K}$ and $\mathcal{K}_\infty$ functions of $V_i$ appearing in Definition \ref{Def1}. Prior to presenting the next theorem, we define 
$\Lambda\Let\text{diag}(\lambda_1,\ldots,\lambda_N)$, $\Delta\Let\{\delta_{ij}\}$, where $\delta_{ii}=0$ $\forall i\in[1;N]$, and $\Gamma(\as)\Let[\gamma_1(s_1);\ldots;\gamma_N(s_N)]$, where $\as=[s_1;\ldots;s_N]$.
In the next theorem, we leverage a small-gain type condition to quantify the error between the interconnection of stochastic control subsystems and that of their abstractions in a compositional way.
\end{assumption}
\begin{theorem}\label{Thm5}
Consider the interconnected stochastic control system
$\Sigma=\mathcal{I}(\Sigma_1,\ldots,\Sigma_N)$ induced by $N\in\N_{\geq1}$ stochastic
control subsystems~$\Sigma_i$. Suppose that each stochastic control subsystem $\Sigma_i$ admits an abstraction $\widehat \Sigma_i$ with the corresponding SPSF $V_i$. If Assumption~\ref{Asum1} holds and there exists a vector $\mu\in\R^N_{>0}$ such that the inequality
\begin{align}
\mu^T(-\Lambda+\Delta)< 0
\end{align}
is also met, then 
\begin{align}\notag
V(x,\hat x)\Let\sum_{i=1}^N\mu_iV_i(x_i,\hat x_i)
\end{align}
is an SSF function from $\widehat \Sigma=\mathcal{I}(\widehat \Sigma_1,\ldots,\widehat\Sigma_N)$ to $\Sigma=\mathcal{I}(\Sigma_1,\ldots,\Sigma_N)$. 	
\end{theorem}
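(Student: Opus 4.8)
The plan is to verify the two defining properties of a stochastic simulation function in Definition~\ref{Def2} for the candidate $V(x,\hat x)=\sum_{i=1}^N\mu_i V_i(x_i,\hat x_i)$, feeding in the corresponding SPSF properties of the subsystems $V_i$ together with Assumption~\ref{Asum1} and the small-gain inequality $\mu^T(-\Lambda+\Delta)<0$.

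First I would establish the lower bound \eqref{eq:lowerbound2}. Since the interconnection keeps only the external channels, $h(x)=[h_{11}(x_1);\ldots;h_{NN}(x_N)]$, and each external block is a sub-vector of the full subsystem output, one has $\Vert h_{ii}(x_i)-\hat h_{ii}(\hat x_i)\Vert\le\Vert h_i(x_i)-\hat h_i(\hat x_i)\Vert$, so monotonicity of $\alpha_i$ gives $\alpha_i(\Vert h_{ii}(x_i)-\hat h_{ii}(\hat x_i)\Vert)\le V_i(x_i,\hat x_i)$. If $\Vert h(x)-\hat h(\hat x)\Vert=r$, then some block satisfies $\Vert h_{i^*i^*}-\hat h_{i^*i^*}\Vert\ge r/\sqrt N$, whence $V\ge\mu_{i^*}V_{i^*}\ge\mu_{i^*}\alpha_{i^*}(r/\sqrt N)\ge\min_i\mu_i\alpha_i(r/\sqrt N)$. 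Taking $\alpha(r)\Let\min_i\mu_i\alpha_i(r/\sqrt N)\in\mathcal{K}_\infty$ yields \eqref{eq:lowerbound2}.

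The substantial work is the decay inequality \eqref{eq:martingale2}. For each $i$ I would invoke the SPSF property \eqref{eq:V_dec} of $V_i$: for the given $\hat\nu_i$ and the abstract internal input $\hat\omega_i$ (which, under $\hat\omega_{ij}=\hat y_{ji}$, is a deterministic function of $\hat x$), there is an interface input $\nu_i$ making \eqref{eq:V_dec} hold for \emph{every} internal input, in particular for the concrete interconnection value $\omega_{ij}=y_{ji}$; stacking the $\nu_i$ produces a policy $\nu$ for $\Sigma$. Because the subsystem noises are independent and the next states are conditionally independent given $(x(k),\hat x(k))$ and the inputs, the expectation of $V$ splits additively, so the $\mu_i$-weighted sum of the $N$ subsystem inequalities gives
\[
\EE\big[V(x(k+1),\hat x(k+1))\,\big|\,\cdots\big]-V\le\sum_{i=1}^N\mu_i\Big(-\kappa_i(V_i)+\rho_{i\mathrm{int}}(\Vert\omega_i-\hat\omega_i\Vert)+\rho_{i\mathrm{ext}}(\Vert\hat\nu_i\Vert)+\psi_i\Big).
\]
The crux is converting the internal-input terms into a gain matrix acting on $(\gamma_j(V_j))_j$. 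The mismatch is $\Vert\omega_i-\hat\omega_i\Vert=\sqrt{\sum_{j\ne i}\Vert h_{ji}(x_j)-\hat h_{ji}(\hat x_j)\Vert^2}\le\sum_{j\ne i}\Vert h_j(x_j)-\hat h_j(\hat x_j)\Vert\le\sum_{j\ne i}\alpha_j^{-1}(V_j)$. Applying the weak triangle inequality $\rho(\sum_{j=1}^{m}a_j)\le\sum_j\rho(m\,a_j)$ (valid for any increasing $\rho\ge0$, via $\sum a_j\le m\max_j a_j$) with $m=N-1$, and then \eqref{Con1a}, yields $\rho_{i\mathrm{int}}(\Vert\omega_i-\hat\omega_i\Vert)\le\sum_{j\ne i}\rho_{i\mathrm{int}}((N-1)\alpha_j^{-1}(V_j))\le\sum_{j\ne i}\delta_{ij}\gamma_j(V_j)$, with the $h_{ji}\equiv0$ terms dropping since then $\delta_{ij}=0$. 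Using $\kappa_i(V_i)\ge\lambda_i\gamma_i(V_i)$ and $\delta_{ii}=0$, the state-dependent part collapses to $\mu^T(-\Lambda+\Delta)\,\bar\gamma$ with $\bar\gamma\Let[\gamma_1(V_1);\ldots;\gamma_N(V_N)]$.

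Finally I would exploit $\mu^T(-\Lambda+\Delta)<0$: writing $\mu^T(-\Lambda+\Delta)=-[\eta_1,\ldots,\eta_N]$ with all $\eta_j>0$, and noting that $V=\sum_i\mu_iV_i$ forces some $V_{i^*}\ge V/(N\mu_{i^*})$, I obtain $\sum_j\eta_j\gamma_j(V_j)\ge(\min_j\eta_j)\min_i\gamma_i(V/(N\mu_i))\Let\kappa(V)$, a $\mathcal{K}$ (indeed $\mathcal{K}_\infty$) function. Collecting the remaining terms into $\rho_{\mathrm{ext}}(r)\Let\sum_i\mu_i\rho_{i\mathrm{ext}}(r)$ (using $\Vert\hat\nu_i\Vert\le\Vert\hat\nu\Vert$) and $\psi\Let\sum_i\mu_i\psi_i\in\R_{\ge0}$ then establishes \eqref{eq:martingale2}. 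I expect the main obstacle to be this last construction: turning the purely algebraic matrix condition into a genuine comparison function $\kappa\in\mathcal{K}$ that bounds $\sum_j\eta_j\gamma_j(V_j)$ from below by $\kappa(\sum_i\mu_iV_i)$, which hinges on the max-versus-sum argument and is also precisely where the $(N-1)$ scaling hard-wired into Assumption~\ref{Asum1} must match the weak-triangle step above.
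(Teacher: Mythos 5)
Your proposal is correct and takes essentially the same route as the paper's proof, which is omitted in the text with a pointer to Theorem 4.5 of \cite{zamani2014compositional}: a weighted-sum candidate, a pointwise-minimum construction for $\alpha$, substitution of the interconnection constraints $\omega_{ij}=y_{ji}$ and $\hat\omega_{ij}=\hat y_{ji}$ into each SPSF inequality via the interface functions, the weak triangle inequality whose $(N-1)$ factor matches \eqref{Con1a}, and the max-versus-sum argument turning $\mu^T(-\Lambda+\Delta)<0$ into a genuine $\kappa\in\mathcal{K}$. Your adaptation to the discrete-time setting (linearity of conditional expectation plus independence of the subsystem noises to reduce the global conditioning to the subsystem SPSF inequalities) is exactly the adjustment the paper's citation implies.
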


The proof is similar to that of Theorem 4.5 in \cite{zamani2014compositional}, and is omitted here due to lack of space.

\section{Discrete-Time Linear Stochastic Control Systems}
In this section, we focus on a class of discrete-time linear stochastic control systems, defined as follows:
\begin{align}\label{linear}
\Sigma:\left\{\hspace{-1.5mm}\begin{array}{l}x(k+1)=Ax(k)+B\nu(k)+D\omega(k)+F\varsigma(k),\\
y(k)=Cx(k),\end{array}\right.
\end{align}
where the additive noise $\varsigma(k)$ is a sequence of independent random vectors with multivariate standard normal distributions.
We use the tuple $\Sigma=\left(A,B,C,D,F\right)$ to refer to the class of systems in \eqref{linear}. Here, we provide conditions under which a candidate $V$ is an SPSF function facilitating the construction of an abstraction $\hat \Sigma$. 

Let us assume that there exist matrix
$K$ and positive definite matrix $M$ such that the matrix inequalities
\begin{align}\label{Eq8a}
&C^TC\preceq M, \\
&\Big((1+\pi)(A+BK)^TM(A+BK)-M\Big)\preceq -\widehat\kappa M, \label{Eq9a}
\end{align}
hold for some positive constants $\pi$ and $0<\widehat\kappa<1$. We employ the following quadratic SPSF
\begin{align}\
V(x,\hat x)=(x-P\hat x)^TM(x-P\hat x), \label{Eq10a}
\end{align}
where $P\in\R^{n\times\hat n}$ is a matrix of appropriate dimension. 
Assume that the equalities 
\begin{eqnarray}\label{Eq11a}
AP&=&P\hat A-BQ\\
D&=&P\hat D-BS\label{Eq12a}\\
CP&=&\hat C,\label{Eq14a}
\end{eqnarray}
hold for some matrices $Q$ and $S$ of appropriate dimensions and possibly with the lowest possible $\hat n$. In the next theorem, we show that under the aforementioned conditions $V$ in (\ref{Eq10a}) is an SPSF from $\widehat \Sigma$ to $\Sigma$.

\begin{theorem}\label{Thm3}
Let $\Sigma=(A,B,C,D,F)$ and $\widehat \Sigma=(\hat A,\hat B,\hat C,\hat
D,\hat F)$ be two discrete-time linear stochastic control subsystems with two independent additive noises. 
Suppose that there exist matrices $M$, $K$, $P$, $Q$, and $S$ satisfying (\ref{Eq8a}),~(\ref{Eq9a}), (\ref{Eq11a}), (\ref{Eq12a}), and (\ref{Eq14a}). Then, $V$ defined in (\ref{Eq10a}) is an SPSF from $\widehat\Sigma$ to $\Sigma$.
\end{theorem}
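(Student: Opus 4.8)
The plan is to verify the two items of Definition~\ref{Def1} directly, with the choices $\alpha(r)=r^2$, $\kappa(r)=\widehat\kappa r$, interface function $\nu=K(x-P\hat x)+Q\hat x+S\hat\omega$ (which depends only on $x,\hat x,\hat\nu,\hat\omega$, as the definition permits, and crucially not on $\omega$), and with $\rho_{\mathrm{int}},\rho_{\mathrm{ext}},\psi$ read off from the computation. I would first dispatch the lower bound \eqref{eq:V_dec1}: by \eqref{Eq14a} we have $h(x)-\hat h(\hat x)=Cx-\hat C\hat x=C(x-P\hat x)$, so $\Vert h(x)-\hat h(\hat x)\Vert^2=(x-P\hat x)^TC^TC(x-P\hat x)$, and applying $C^TC\preceq M$ from \eqref{Eq8a} yields $\Vert h(x)-\hat h(\hat x)\Vert^2\le V(x,\hat x)$, i.e.\ \eqref{eq:V_dec1} holds with $\alpha(r)=r^2$.

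For the decrease condition \eqref{eq:V_dec}, I would set $\xi\Let x(k)-P\hat x(k)$ and substitute the linear dynamics of $\Sigma$ and $\widehat\Sigma$ together with the interface $\nu$. The crucial algebraic cancellations come from \eqref{Eq11a} (which turns $Ax-P\hat A\hat x+BQ\hat x$ into $A\xi$) and from \eqref{Eq12a} (which turns $D\omega-P\hat D\hat\omega+BS\hat\omega$ into $D(\omega-\hat\omega)$), producing the one-step error
\[
x(k+1)-P\hat x(k+1)=(A+BK)\xi+D(\omega-\hat\omega)-P\hat B\hat\nu+F\varsigma-P\hat F\hat\varsigma .
\]
Since $\nu$ does not depend on $\omega$ and the latter enters only through $\Vert\omega-\hat\omega\Vert$, the resulting bound will hold for all $\omega$, as required.

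Next I would take the conditional expectation of $V(x(k+1),\hat x(k+1))=\big(x(k+1)-P\hat x(k+1)\big)^TM\big(x(k+1)-P\hat x(k+1)\big)$. Because $\varsigma$ and $\hat\varsigma$ are independent, zero-mean standard normal vectors, all cross terms between the deterministic part and the noise vanish, the $\varsigma$--$\hat\varsigma$ cross term vanishes, and the quadratic noise term collapses to the constant $\Tr(F^TMF)+\Tr(\hat F^TP^TMP\hat F)$, which I would take as $\psi$. It then remains to bound the deterministic quadratic $\zeta^TM\zeta$ with $\zeta=(A+BK)\xi+D(\omega-\hat\omega)-P\hat B\hat\nu$. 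Here I would apply Young's inequality in the $M$-inner product with the parameter $\pi$ of \eqref{Eq9a} to split off $(1+\pi)\,\xi^T(A+BK)^TM(A+BK)\xi$, which \eqref{Eq9a} bounds by $(1-\widehat\kappa)\,\xi^TM\xi=(1-\widehat\kappa)V(x,\hat x)$; a second Young step sends the remaining input terms to multiples of $\Vert\omega-\hat\omega\Vert^2$ and $\Vert\hat\nu\Vert^2$, defining $\rho_{\mathrm{int}}(s)$ and $\rho_{\mathrm{ext}}(s)$ as suitable constants (involving the largest eigenvalues of $D^TMD$ and $\hat B^TP^TMP\hat B$) times $s^2$. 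Subtracting $V(x,\hat x)$ then reproduces \eqref{eq:V_dec} with $\kappa(r)=\widehat\kappa r\in\mathcal K$, which is legitimate since $0<\widehat\kappa<1$.

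The conceptual content is light: \eqref{Eq11a}, \eqref{Eq12a}, and \eqref{Eq14a} are precisely the relations that make the error variable $\xi$ evolve autonomously (independently of the absolute states and of $\hat\nu$ up to the external term), \eqref{Eq8a} supplies the output lower bound, and \eqref{Eq9a} supplies the contraction. I expect the main obstacle to be purely the bookkeeping of the last step, namely choosing the auxiliary Young parameters so that the contraction coefficient matches \eqref{Eq9a} exactly and so that every cross term is absorbed into the $\Vert\omega-\hat\omega\Vert$ and $\Vert\hat\nu\Vert$ terms rather than polluting the $-\widehat\kappa V$ term.
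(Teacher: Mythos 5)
Your proposal is correct and follows essentially the same route as the paper's proof: the identical lower-bound argument via \eqref{Eq14a} and \eqref{Eq8a}, the same cancellations from \eqref{Eq11a} and \eqref{Eq12a} making the error $\xi=x-P\hat x$ evolve as $(A+BK)\xi+D(\omega-\hat\omega)+(\text{input term})\hat\nu+F\varsigma-P\hat F\hat\varsigma$, the same vanishing of noise cross terms and trace constant $\psi=\Tr\big(F^TMF+\hat F^TP^TMP\hat F\big)$, and the same Young-splitting that produces the factor $(1+\pi)$ matched to \eqref{Eq9a}. The only deviation is that the paper's interface \eqref{Eq17aa} carries an extra feedforward term $\widetilde R\hat\nu$ while yours corresponds to $\widetilde R=0$; this is admissible because the theorem imposes no condition on $\widetilde R$ (cf.\ Remark \ref{Rem1}, where it is chosen as in \eqref{Eq18a} only to minimize $\rho_{\mathrm{ext}}$), so your argument proves the statement with the merely less tight gain $\rho_{\mathrm{ext}}(s)=\big(1+\tfrac{2}{\pi}+\tfrac{2}{\pi}\big)\Vert\sqrt{M}P\hat B\Vert^2 s^2$ in place of $\Vert\sqrt{M}(B\widetilde R-P\hat B)\Vert^2$.
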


\begin{proof}
Here, we show that $\forall x$, $\forall \hat x$, $\forall \hat \nu$, $\forall \hat \omega$, $\exists\nu$, $\forall \omega$, such that $V$ satisfies $\Vert Cx-\hat C\hat x\Vert^2\le V(x,\hat x)$ and
\begin{align}\notag\label{Eq16a}
&\EE \Big[V(x(k+1),\hat x(k+1))\,|\,x(k),\hat x(k), \omega(k)\!=\!\omega, \hat \omega(k)\!=\!\hat \omega,\\\notag
&, \hat \nu(k)\!=\! \hat \nu \Big]-V(x(k),\hat x(k))\\\notag 
&\leq-\widehat\kappa (V(x(k),\hat x(k)))+(1+\frac{2}{\pi}+\frac{\pi}{2}){\Vert\sqrt{M}D\Vert^2}\Vert \omega-\hat \omega\Vert^2\\\notag
&+(1+\frac{2}{\pi}+\frac{2}{\pi})\Vert\sqrt{M}(B\widetilde R-P\hat B)\Vert^2\Vert\hat \nu\Vert^2\\
&+\text{Tr}\Big(F^TMF+\hat F^TP^TMP\hat F\Big).
\end{align}
According to (\ref{Eq14a}), we have $\Vert Cx-\hat C\hat x\Vert^2=(x-P\hat
x)^TC^TC(x-P\hat x)$. By applying (\ref{Eq8a}), it can be easily verified that 
$\Vert Cx-\hat C\hat x\Vert^2\le V(x,\hat x)$ holds $\forall x$, $\forall \hat x$. Now, we show inequality (\ref{Eq16a}). Given any $x$, $\hat x$, $\hat \nu$, and $\hat \omega$, we choose $\nu$ via the following \emph{linear interface function}:
\begin{align}\label{Eq17aa}
\nu=\nu_{\hat \nu}(x,\hat x,\hat \nu,\hat w):=K(x-P\hat x)+Q\hat x+\widetilde R\hat \nu+S \hat \omega,
\end{align}
for some matrix $\widetilde R$ of appropriate dimension. By Employing equations (\ref{Eq11a}), (\ref{Eq12a}), and the definition of the interface function in \eqref{Eq17aa}, we simplify
\begin{align}\notag
Ax&+B\nu_{\hat \nu}(x,\hat x, \hat \nu,\hat \omega)+D\omega -P(\hat A\hat x+\hat B\hat \nu+\hat D\hat \omega)\\\notag &+\Big(F\varsigma(k)-P\hat F\hat\varsigma(k)\Big)
\end{align}
to $(A+BK)(x-P\hat x)+D(\omega-\hat \omega)+(B\widetilde R-P\hat B)\hat \nu+\Big(F\varsigma(k)-P\hat F\hat\varsigma(k)\Big)$. 
One obtains:
\begin{align}\notag
&\EE \Big[V(x(k+1),\hat x(k+1))\,|\,x(k),\hat x(k), \omega(k)\!=\!\omega, \hat \omega(k)\!=\!\hat \omega,\\\notag
&, \hat \nu(k)\!=\! \hat \nu \Big]-V(x(k),\hat x(k))\\\notag 
&=(x-P\hat x)^T\Big[(A+BK)^TM(A+BK)-M\Big](x-P\hat x)\\\notag
&+\Big[2(x-P\hat x)^T(A+BK)^T\Big]M\Big[D(\omega-\hat \omega)\Big]\\\notag
&+\Big[2(x-P\hat x)^T(A+BK)^T\Big]M\Big[(B\widetilde R-P\hat B)\hat \nu\Big]\\\notag
&+\Big[2(\omega-\hat \omega)^TD^T\Big]M\Big[(B\widetilde R-P\hat B)\hat \nu\Big]+{\Vert\sqrt{M}D(\omega-\hat \omega)}\Vert^2\\\notag
&+{\Vert\sqrt{M}(B\widetilde R-P\hat B)\hat \nu}\Vert^2+\text{Tr}\big(F^TMF+\hat F^TP^TMP\hat F\big).
\end{align}
Using Young's inequality \cite{young1912classes} as $ab\leq \frac{\pi}{2}a^2+\frac{1}{2\pi}b^2,$ for any $a,b\geq0$ and any $\pi>0$, and by employing Cauchy-Schwarz inequality and \eqref{Eq9a}, one obtains the following upper bound:
\begin{align}\notag
&\EE \Big[V(x(k+1),\hat x(k+1))\,|\,x(k),\hat x(k), \omega(k)\!=\!\omega, \hat \omega(k)\!=\!\hat \omega,\\\notag
&, \hat \nu(k)\!=\! \hat \nu \Big]-V(x(k),\hat x(k))\\\notag 
&\leq-\widehat\kappa (V(x,\hat x))+(1+\frac{2}{\pi}+\frac{\pi}{2}){\Vert\sqrt{M}D\Vert^2}\Vert \omega-\hat \omega\Vert^2\\\notag
&+(1+\frac{2}{\pi}+\frac{2}{\pi})\Vert\sqrt{M}(B\widetilde R-P\hat B)\Vert^2\Vert\hat \nu\Vert^2\\
&+\text{Tr}\Big(F^TMF+\hat F^TP^TMP\hat F\Big).
\end{align}
Hence, the proposed V in \eqref{Eq10a} is an SPSF from  $\widehat \Sigma$ to $\Sigma$, which completes the proof. Note that the $\mathcal{K}$ and $\mathcal{K}_\infty$ functions $\kappa$, $\alpha$, and $\rho_{\mathrm{ext}}$, in Definition \ref{Def1} associated with the SPSF
in (\ref{Eq10a}) are $\alpha(s):=s^2$, $\kappa(s):=\widehat\kappa s$, and $\rho_{\mathrm{int}}(s):=(1+\frac{2}{\pi}+\frac{\pi}{2})\Vert\sqrt{M}D\Vert^2 s^2$, $\rho_{\mathrm{ext}}(s):=(1+\frac{2}{\pi}+\frac{2}{\pi})\Vert\sqrt{M}(B\widetilde R-P\hat B)\Vert^2 s^2$, $\forall s\in\R_{\ge0}$. Moreover, positive constant $\psi$ in \eqref{eq:V_dec} is $\psi=\text{Tr}\big(F^TMF+\hat F^TP^TMP\hat F\big)$.
\end{proof}

\begin{remark}
One can readily verify from the result of Theorem \ref{Thm3} that choosing $\hat F$ equal to zero results in smaller constant $\psi$ and, hence, more closeness of linear subsystems and their abstractions. Observe that this is not the case when one assumes the noise of the concrete subsystem and its abstraction are the same as in \cite{zamani2014compositional,zamani2016approximations}.
\end{remark}

\begin{remark}
Note that the results in Theorem \ref{Thm3} do not impose any condition on matrix $\hat B$ and, hence, it can be chosen arbitrarily. As an example, one
can choose $\hat B=I_{\hat n}$ which makes the abstract system $\widehat \Sigma$ fully
actuated and, hence, the synthesis problem over it much easier.
\end{remark}

\begin{remark}\label{Rem1}
Since Theorem \ref{Thm3} does not impose any condition on matrix $\widetilde R$, we choose $\widetilde R$ to minimize function $\rho_{\mathrm{ext}}$ for $V$ as suggested in \cite{girard2009hierarchical}. The following choice for $\widetilde R$
\begin{align}\label{Eq18a}
\widetilde R=(B^TMB)^{-1} B^TM P\hat B.
\end{align}
minimizes $\rho_{\mathrm{ext}}$.
\end{remark}
\section{Example}

Here, we demonstrate the effectiveness of the proposed results for an interconnected system consisting of four discrete-time linear stochastic control subsystems, i.e. $\Sigma=\mathcal{I}(\Sigma_1,\Sigma_2,\Sigma_3,\Sigma_4)$. The interconnection scheme of $\Sigma$ with four inputs and two outputs is illustrated in Figure \ref{Fig1}. 

As seen, the output of $\Sigma_1$ (resp. $\Sigma_2$) is connected to the internal input of
$\Sigma_{4}$ (resp. $\Sigma_3$) and the output of $\Sigma_{3}$ (resp. $\Sigma_4$) connects to the internal input of $\Sigma_{1}$ (resp. $\Sigma_2$). The system matrices are given by
\begin{align}\notag
A_i = I_{25}, ~B_i = I_{25}, ~C_i^T = 0.1\mathbf{1}_{25}, ~F_i = 0.01\mathbf{1}_{25},
\end{align}
for $i\in\{1,2,3,4\}$. The internal input and output matrices are also given by:
\begin{align}\notag
&C^T_{14}=C^T_{23}~=C^T_{31}=C^T_{42}= 0.1\mathbf{1}_{25},\\\notag
&D_{13}=	D_{24}=	D_{32}=	D_{41}= 0.1\mathbf{1}_{25}.
\end{align}
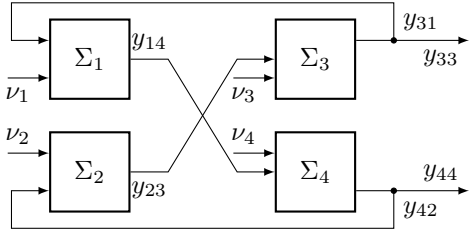
\begin{figure}
	\vspace{0.5cm}
	\centering
	\begin{tikzpicture}[auto, node distance=2cm, >=latex]
	\tikzset{block/.style    = {thick,draw, rectangle, minimum height = 3em, minimum width = 3em}}
	
	\node[block] (sys3) at (3,0) {$\Sigma_3$};
	\node[block] (sys4) at (3,-1.5) {$\Sigma_4$};
	\node[block] (sys1) at (0,0) {$\Sigma_1$};
	\node[block] (sys2) at (0,-1.5) {$\Sigma_2$};
	
	\draw[->] ($(sys3.east)+(0,0.25)$) -- node[near end,below] {$y_{33}$} ++(1.5,0);
	
	\draw[->] ($(sys4.east)+(0,-.25)$) -- node[near end, above] {$y_{44}$} ++(1.5,0);
	
	\draw[<-] ($(sys1.west)+(0,-.25)$) -- node[below, near end] {$\nu_{1}$} ++(-0.55,0);
	
	\draw[<-] ($(sys2.west)+(0,0.25)$) -- node[above, near end] {$\nu_{2}$} ++(-0.55,0);
	
	\draw[<-] ($(sys3.west)+(0,-.25)$) -- node[below, near end] {$\nu_{3}$} ++(-0.55,0);
	
	\draw[<-] ($(sys4.west)+(0,0.25)$) -- node[above, near end] {$\nu_{4}$} ++(-0.55,0);
	
	\draw[->] ($(sys3.east)+(.5,.25)$) |- node[near start, right] {$y_{31}$}  ($(sys1.west)+(-.5,.75)$) |- ($(sys1.west)+(0,.25)$) ;
	\draw[fill] ($(sys3.east)+(.5,.25)$) circle (1pt);
	
	\draw[->] ($(sys4.east)+(.5,-.25)$) |-node[near start, right] {$y_{42}$} ($(sys2.west)+(-.5,-.75)$) |- ($(sys2.west)+(0,-.25)$);
	\draw[fill] ($(sys4.east)+(.5,-.25)$) circle (1pt);
	
	\draw[->] ($(sys1.east)$)              -- node[above] {$y_{14}$} 
	($(sys1.east)+(0.5,0.00)$)  --
	($(sys4.west)+(-.5,0)$) -- (sys4.west);
	
	\draw[->] ($(sys2.east)$)              -- node[below] {$y_{23}$} 
	($(sys2.east)+(0.5,0.00)$)  --
	($(sys3.west)+(-.5,0)$) -- (sys3.west);
	
	\end{tikzpicture}
	\caption{The interconnected system $\Sigma=\mathcal{I}(\Sigma_1,\Sigma_2,\Sigma_3,\Sigma_4)$.}\label{Fig1}
	\vspace{-0.3cm}
\end{figure}
In order to construct an abstraction for $\mathcal{I}(\Sigma_1,\Sigma_2,\Sigma_3,\Sigma_4)$, we construct an abstraction $\widehat \Sigma_i$ of each individual subsystem $\Sigma_i$, $i\in\{1,2,3,4\}$. We first fix $\widehat\kappa$ and $\pi$ for each subsystem, and then determine the matrices $M$ and $K$ such that (\ref{Eq8a}) and (\ref{Eq9a}) hold for $i\in\{1,2,3,4\}$:
\begin{align}\notag
M_i = I_{25}, ~K_i = -0.95I_{25}, ~\widehat\kappa_i= 0.98, ~\pi_i = 0.99.
\end{align}
We continue with determining other matrices such that (\ref{Eq11a}), (\ref{Eq12a}), and (\ref{Eq14a}) hold:
\begin{align}\notag
P_i = \mathbf{1}_{25}, ~Q_i =\mathbf{1}_{25}, ~S_i= -0.003\mathbf{1}_{25},
\end{align}
for $i\in\{1,2,3,4\}$. Accordingly, the matrices of abstract subsystems are computed as:
\begin{align}\notag
\hat A_i = 2, ~\hat C_i =2.5, ~\hat D_i= 0.096,
\end{align}
for $i\in\{1,2,3,4\}$. Note that here $\hat F_i$, $i\in\{1,2,3,4\}$, are considered zero in order to reduce constants $\psi_i$ for each $V_i$. Moreover, $\hat B_i$ are chosen $1$ and we compute $\widetilde R_i$, $i\in\{1,2,3,4\}$, using (\ref{Eq18a}) as $\widetilde R_i = \mathbf{1}_{25}$.
The interface function for $i\in\{1,2,3,4\}$ follows by \eqref{Eq17aa} as:
\begin{align}\notag
\nu_i=-0.95I_{25}(x_i-\mathbf{1}_{25}\hat x_i)\!+\!\mathbf{1}_{25}\hat x_i\!+\!\mathbf{1}_{25}\hat \nu_i\!-\!0.003\mathbf{1}_{25}\hat \omega_i.
\end{align}
Hence, Theorem~\ref{Thm3} holds and ~$V_i(x_i,\hat x_i)=(x_i-\mathbf{1}_{25}\hat x_i)^T M_i(x_i-\mathbf{1}_{25}\hat x_i)$ is an SPSF function from~$\widehat\Sigma_i$ to~$\Sigma_i$ satisfying conditions \eqref{eq:V_dec1} and \eqref{eq:V_dec}  with $\alpha_i(s) = s^2\!, ~\kappa_i(s)=0.98s, ~\rho_{i\mathrm{ext}}(s)= 0, ~\rho_{i\mathrm{int}}(s)=0.88s^2\!,$ and $\psi_i=0.0025,$ for $i\in\{1,2,3,4\}$. We now proceed with Theorem \ref{Thm5} to construct a stochastic simulation function form $\widehat \Sigma$ to $\Sigma$. Assumption \ref{Asum1} holds with $\gamma_i(s)=s$ and:
\begin{equation*}
\resizebox{\hsize}{!}{%
	$
	\Delta=
	\begin{bmatrix}
	0 & 0 & 0.88 & 0\\
	0 & 0 & 0 & 0.88\\
	0 & 0.88& 0 & 0\\
	0.88& 0 & 0 & 0
	\end{bmatrix}\!\!,
	~\Lambda=
	\begin{bmatrix}
	0.98 & 0 & 0 & 0\\
	0 & 0.98 & 0 & 0\\
	0 & 0& 0.98 & 0\\
	0& 0 & 0 &0.98
	\end{bmatrix}\!\!.
	$}
\end{equation*} 
Additionally, one can readily verify that  a vector $\mu\in\R_{>0}^4$ exists here since the spectral radius of $\Lambda^{-1}\Delta$ is strictly less than one \cite{dashkovskiy2011small}. By choosing vector $\mu$ as $\mu^T=[1~\;1~\;1~\;1]$, the function
\begin{align}\notag
V(x,\hat x)\!=\!V_1(x_1,\hat x_1) \!\!+ \!\!V_2(x_2,\hat x_2)\notag
\!\!+\!\!V_3(x_3,\hat x_3) \!\!+ \!\!V_4(x_4,\hat x_4)
\end{align}
is an SSF from $\mathcal{I}(\widehat \Sigma_1, \widehat \Sigma_2,\widehat \Sigma_3,\widehat \Sigma_4)$ to
$\mathcal{I}(\Sigma_1,\Sigma_2,\Sigma_3,\Sigma_4)$ satisfying conditions \eqref{eq:lowerbound2} and \eqref{eq:martingale2}  with $\alpha(s)=s^2$, $\kappa(s)= 0.1s$, $\rho_{\mathrm{ext}}(s)=0$, $\forall s\in\R_{\ge0}$, and $\psi=0.01$. If the initial
states of the interconnected systems $\Sigma$ and $\widehat \Sigma$ are started from zero, one can readily verify that the norm of error between outputs of $\Sigma$ and of $\widehat \Sigma$ will not exceed $1$ with probability at least $90\%$ computed by the stochastic simulation function $V$ using inequality (\ref{Eq2a}) for $T=10$.

\section{Discussion}
In this paper, we provided a compositional approach for abstractions of interconnected discrete-time stochastic control systems, with independent noises in the abstract and concrete subsystems. First, we introduced new notions of stochastic pseudo-simulation and stochastic simulation functions in order to quantify the distance in a probability setting between original stochastic control subsystems and their abstractions and their interconnections, respectively. Therefore, one can employ the proposed results here to potentially solve safety/reachability problems over the abstract interconnected systems and then refine the results to the concrete interconnected ones. Furthermore,
we provided a computational scheme for the class of discrete-time linear stochastic control systems to construct abstractions together with their corresponding stochastic pseudo-simulation functions. Finally, we demonstrated the effectiveness of the results by constructing an
abstraction (totally $4$ dimensions) of the interconnection of four discrete-time linear stochastic control subsystems (together $100$ dimensions) in a compositional fashion.

\bibliographystyle{ieeetran}
\bibliography{biblio}

\end{document}